\newtheorem{theorem}{Theorem}
\newtheorem{lemma}{Lemma}
\newcommand{\Report}{\textsf{Report}}
\newcommand{\Update}{\textsf{Update}}
\newcommand{\problem}{\text{$w$-SSWSI}}
\newcommand{\dproblem}[1]{\text{($w,#1$)-SSWSI}}
\newcommand{\sn}{\ensuremath{s}}
\title{Sliding Window String Indexing in Streams}
\author{Philip Bille\footnote{Supported by Danish Research Council grant DFF-8021-002498.}
\\\texttt{phbi@dtu.dk} \and Johannes Fischer\\\texttt{johannes.fischer@cs.tu-dortmund.de} \and Inge Li G{\o}rtz\footnotemark[1]\\\texttt{inge@dtu.dk} \and Max Rish{\o}j Pedersen\footnotemark[1]\\\texttt{mhrpe@dtu.dk} \and Tord Joakim Stordalen\\\texttt{tjost@dtu.dk}}
\date{}
\begin{document}
\maketitle
\begin{abstract}
    Given a string $S$ over an alphabet $\Sigma$, the \emph{string indexing problem} is to preprocess $S$ to subsequently support efficient pattern matching queries, that is, given a pattern string $P$ report all the occurrences of $P$ in $S$. In this paper we study the \emph{streaming sliding window string indexing problem}. Here the string $S$ arrives as a stream, one character at a time, and the goal is to maintain an index of the last $w$ characters, called the \emph{window}, for a specified parameter $w$. At any point in time a pattern matching query for a pattern $P$ may arrive, also streamed one character at a time, and all occurrences of $P$ within the current window must be returned. The streaming sliding window string indexing problem naturally captures scenarios where we want to index the most recent data (i.e. the window) of a stream while supporting efficient pattern matching. 
        
    Our main result is a simple $O(w)$ space data structure that uses $O(\log w)$ time with high probability to process each character from both the input string $S$ and the pattern string $P$. Reporting each occurrence from $P$ uses additional constant time per reported occurrence. Compared to previous work in similar scenarios this result is the first to achieve an efficient worst-case time per character from the input stream with high probability. We also consider a delayed variant of the problem, where a query may be answered at any point within the next $\delta$ characters that arrive from either stream. We present an $O(w + \delta)$ space data structure for this problem that improves the above time bounds to $O(\log (w/\delta))$. In particular, for a  delay of $\delta = \epsilon w$ we obtain an $O(w)$ space data structure with constant time processing per character. The key idea to achieve our result is a novel and simple hierarchical structure of suffix trees of independent interest, inspired by the classic log-structured merge trees.  
\end{abstract}

\section{Introduction}
The \emph{string indexing problem} is to preprocess a string $S$ into a compact data structure that supports efficient subsequent pattern matching queries, that is, given a pattern string $P$, report all occurrences of $P$ within $S$. 
In this paper, we introduce a basic variant of string indexing called the \emph{streaming sliding window string indexing  (SSWSI) problem}. Here, the string $S$ arrives as a stream one character at a time, and the problem is to maintain an index of a \emph{window} of the last $w$ characters, for a specified parameter $w$. At any point in time a pattern matching query for a pattern $P$ may arrive, also streamed one character at a time, and we need to report the occurrences of $P$ within the current window. The goal is to compactly maintain the index while processing the characters arriving in either stream efficiently. We consider two variants of the problem: a \emph{timely} variant where each query must be answered immediately, and a \emph{delayed} variant where it may be answered at any point within the next $\delta$ characters arriving from either stream, for a specified parameter $\delta$. See Section~\ref{sec:setup_results} for precise definitions.

The  SSWSI problem naturally captures scenarios where we want to index the most recent data (i.e. the window) of a stream while supporting efficient pattern matching. For instance, monitoring a high-rate data stream system where we cannot feasibly index the entire stream but still want to support efficient queries. Depending on the specific system we may require immediate answers to queries, or we may be able to afford a delay that allows more efficient queries and updates.

The SSWSI problem has not been explicitly studied before in our precise formulation, but for the timely variant several closely related problem are well-studied. In particular, the \emph{sliding window suffix tree problem}~\cite{FG1989, Larsson1999, Senft2005,BJ2018, NAIP2003} is to maintain the \emph{suffix tree} of the current window (i.e., the compact trie of the suffixes of the window) as each character arrives. With appropriate augmentation the suffix tree can be used to process pattern matching queries efficiently, leading to a solution to the timely SSWSI problem. For constant-sized alphabets, the best of these solutions \cite{BJ2018} maintains the sliding window suffix tree in constant \emph{amortized} time per character while supporting efficient pattern matching queries. The other solutions achieve similar amortized time bounds. This amortization cannot be avoided since explicitly maintaining the suffix tree after the arrival of a new character may incur $\Omega(w)$ changes. 

Another closely related problem is the \emph{online string indexing problem}  \cite{AKLL2005, Kopelowitz2012, BI2013, Kosaraju1994, AN2008, KN2017, FG2015, AFGK+2014}. Here the goal is to process $S$ one character at a time (in either left-to-right or right-to-left order), while incrementally building an index on the string read so far. The best of these solutions achieve either constant time per character for constant-sized alphabets~\cite{KN2017} or $O(\log\log n + \log\log |\Sigma|)$ time for general alphabets~\cite{Kopelowitz2012} for building the index. These solutions all heavily rely on processing the string in right-to-left order to avoid the inherent linear time suffix tree updates due to appending, as mentioned above. Therefore they cannot be applied in our left-to-right streaming setting. Alternatively, we can instead apply these solutions on the reverse of the string $S$, but then each pattern must be processed in reverse order, which also cannot be done in our setting. Also, note that these solutions index the entire string read so far. It is not clear if they can be adapted to efficiently index a sliding window.

Our main result is an efficient solution to the SSWSI problem in both the timely and delayed variant. Let $w$ denote the size of the window. For the timely variant, we present a string index that uses $O(w)$ space and processes a character from the stream $S$ in $O(\log w)$ time. Each pattern matching query $P$ is also supported in $O(\log w)$ time per character with additional $O(\mathrm{occ})$ time incurred after receiving the last character of $P$, where $\mathrm{occ}$ is the number of occurrences of $P$ in the current window. The index is randomized and both time bounds hold with high probability. The results hold for any integer alphabet (not necessarily of polynomial size in $n$). Compared to the previous work, we improve the worst-case time bounds per character in the stream from $\Omega(w)$ to $O(\log w)$ with high probability. This is particularly important in the above mentioned applications, such as high-rate data stream systems. Our solution generalizes to the delayed variant of the problem. If we allow a delay of $\delta$ before answering each query we achieve $O(w + \delta)$ space while improving the above time bounds to $O(\log (w/\delta))$. In particular, if we allow a delay of $\delta = \epsilon w$ for any constant $\epsilon > 0$, we achieve linear space and optimal constant time (reporting the occurrences still takes $O(\mathrm{occ})$ time, and we do not count the reporting time towards the delay). All our results hold on a word RAM with logarithmic word size.

The key idea to achieve our result is a novel and simple hierarchical structure of suffix trees inspired by log-structured merge trees\cite{OCGO1996}. Instead of maintaining a single suffix tree on the window we maintain a collection of suffix trees of exponentially increasing sizes that cover the current window. We show how to efficiently maintain the structure as new characters from the stream arrive by incrementally ``merging'' suffix trees, while supporting efficient pattern matching queries within the window.

\subsection{Setup and Results}\label{sec:setup_results}
We formally define the problem as follows. Let $S$ be a stream over any integer alphabet $\Sigma$. For given integer parameters $w \geq 1$ and $\delta \geq 0$, the \emph{$\delta$-delayed streaming sliding window string indexing (\dproblem{\delta})} problem is to maintain a data structure that, after receiving the first $i$ characters of $S$, supports
    \begin{itemize}
        \item $\Report(P)$: report all the occurrences of $P$ in $S[i - w + 1 , i]$ before an additional $\delta$ characters have arrived, from either stream.
        \item $\Update()$: process the next character in the stream $S$. 
\end{itemize} 
In the $\Report(P)$ query the pattern string $P$ is also streamed. When $P$ is streamed it interrupts the stream $S$, arrives one character at a time, and all character of $P$ arrive before the streaming of $S$ resumes. Furthermore, we do not assume that we know the length of $P$ before the arrival of its last character. The delay is counted from after the last character of $P$ arrives. Each character from $S$ and from new patterns count towards the delay, while reporting occurrences does not (otherwise it would impossible to answer the query in time if there are more than $\delta$ occurrences).

We define the \emph{timely streaming sliding window string indexing (\problem{}) problem} to be $\dproblem{0}$, that is, queries must be answered immediately as the last character of the pattern arrives. 

We show the following general main result. 
\begin{theorem}
\label{thm:main_whp}
        Let $S$ be a stream and let $w \geq 1$ and $\delta \geq 0$ be integers. We can solve the \dproblem{\delta} problem on $S$ with an $O(w + \delta)$ space data structure that supports $\Update$ and $\Report$ in $O(\log \frac{w}{\delta + 1})$ time per character with high probability. Furthermore, $\Report$ uses additional worst-case constant time per reported occurrence.  
\end{theorem}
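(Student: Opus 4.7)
The plan is to build a hierarchy of $\ell = \Theta(\log \tfrac{w}{\delta+1})$ suffix trees $T_0, T_1, \ldots, T_{\ell-1}$, inspired by log-structured merge trees, where $T_i$ indexes a contiguous block of the stream of length $b_i = \Theta((\delta+1) \cdot 2^i)$. Together these blocks cover the last $w$ characters of $S$ (up to $O(\delta)$ slack at the newest and oldest ends, which accounts for the $O(w+\delta)$ space). Any occurrence of a pattern $P$ inside the window either lies entirely in one $T_i$ or crosses exactly one block boundary; the latter is handled uniformly by keeping each block prefixed with a short overlap from its neighbour, with total geometric overhead $O(w)$.

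For \Update{} I would use the standard LSM invariant: when the freshest level-$0$ block has absorbed $\Theta(\delta+1)$ new characters it is sealed, and whenever two blocks at level $i$ coexist they are replaced by a single freshly-built suffix tree of size $b_{i+1}$ at level $i+1$. Since the block at level $i$ is rebuilt only once every $\Theta(b_i)$ arrivals and a single rebuild takes $O(b_i)$ expected time using a randomized hashing-based suffix tree construction for integer alphabets, I would deamortize by spreading the rebuild at level $i$ evenly over the $\Theta(b_i)$ arrivals that precede it. This yields $O(1)$ worst-case work per level per character, and $O(\log \tfrac{w}{\delta+1})$ in total, with the high-probability bound inherited from the hashing via a union bound over the $O(w)$ characters alive at any time. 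A two-slot scheme at each level guarantees that an old, query-ready copy of $T_i$ is always available while the next copy is being constructed in the background; the slack $\delta$ is exactly what makes this feasible, since the smallest level must finish one rebuild in the time it takes $\delta+1$ characters to arrive.

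For \Report{} I would, as each character of $P$ streams in, advance a suffix-tree search in every $T_i$ in parallel at cost $O(1)$ per tree, giving $O(\log \tfrac{w}{\delta+1})$ per character of $P$; each character of $P$ also triggers one $O(1)$ deamortization step per level on the $S$-side, preserving the time bound on \Update{}. When the last character of $P$ arrives, I collect the $\mathrm{occ}$ occurrences by walking the subtrees below the loci in each $T_i$ in $O(\mathrm{occ})$ time using the standard suffix-tree-to-suffix-array traversal, filtering out positions that fall outside the current window $S[i-w+1,i]$ or that duplicate a hit found in a neighbour's overlap region; this filtering is $O(1)$ per returned occurrence. Because the search state in each $T_i$ is carried across the at most $\delta$ characters of delay, a query that interrupts the stream during a background rebuild is still answered from the previous, completed copy of every level.

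The main obstacle I anticipate is achieving worst-case per-character bounds with high probability while simultaneously supporting queries that may arrive at any moment, mid-rebuild and mid-merge. Making the schedule robust requires a careful choice of the constant factors in the block sizes $b_i$ so that the per-character charge at every level is truly $O(1)$ even when $P$'s characters and $S$'s characters interleave arbitrarily, and requires proving that the randomized suffix-tree construction at every level stays within its time budget with high probability simultaneously, which I would handle with a union bound exploiting $\ell = O(\log w)$. A secondary technicality is certifying that the overlap trick correctly captures boundary-crossing occurrences without inflating space beyond $O(w+\delta)$; since the overlaps shrink geometrically up the hierarchy, the total remains linear.
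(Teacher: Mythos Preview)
Your high-level skeleton (a log-structured hierarchy of $O(\log\tfrac{w}{\delta+1})$ suffix trees, deamortized merges, parallel descent during \Report) matches the paper's, but two load-bearing mechanisms from the paper are absent, and without them the scheme does not work.

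\textbf{Boundary-crossing occurrences for long patterns.} You claim every occurrence lies in one $T_i$ or crosses exactly one boundary, and that a ``short overlap'' between adjacent blocks handles the latter. This fails when $|P|$ exceeds the overlap, and in particular when $|P|$ exceeds the sizes of the rightmost blocks: a pattern of length $m$ can straddle \emph{several} of the small blocks near the fresh end of the window. No fixed, pattern-independent overlap can capture these. The paper's fix is essential: the boundary tree at $(s_{j+1},s_j)$ is only trusted when $|s_j|\ge m$; all remaining boundaries lie inside a suffix $\mathcal{R}$ of length $O(m)$, and the paper grows, \emph{at query time}, suffix trees of geometrically increasing sizes over this suffix, deamortized over the arriving characters of $P$ (running $O(\log\tfrac{w}{\delta})$ guesses for $\lfloor\log m\rfloor$ in parallel). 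Your proposal has no analogue of this query-time construction.

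\textbf{Covering the freshest $\Theta(\delta)$ characters.} Your two-slot scheme guarantees a query-ready copy at every level, but the ready copy of level $0$ does not contain the characters that arrived since its rebuild began. Hence the most recent $\Theta(\delta)$ characters of the window are not indexed anywhere when a query arrives. The paper handles this by splitting on pattern length: long patterns ($m>\delta/4$) absorb the uncovered suffix into the query-time trees above; short patterns are \emph{buffered} and answered in a batch after a suffix tree over the uncovered tail $t$ is built, exploiting the $\delta$ delay. Your proposal does not buffer short patterns and does not grow query-time trees, so it simply misses these occurrences.

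A secondary issue: reporting by ``walking the subtrees below the loci \ldots filtering out positions'' is not $O(1)$ per \emph{reported} occurrence, because you may touch many leaves that get filtered (outside the window, or in an overlap region). The paper avoids this with recursive range-maximum queries over the suffix array, which touch only leaves that are actually output.
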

Here, with high probability means with probability at least $1 - \frac{1}{w^d}$ for any constant $d$. Theorem~\ref{thm:main_whp} provides a trade-off in the delay parameter $\delta$. In particular, plugging in $\delta = 0$ in Theorem~\ref{thm:main_whp} we obtain a solution to the timely SSWSI problem that uses $O(w)$ space and $O(\log w)$ time per character for both $\Update$ and $\Report$. Compared to the previous work~\cite{FG1989, Larsson1999, Senft2005, BJ2018, NAIP2003, ISTA2004, SD2008} this improves the worst-case bounds on the $\Update$ operation from $S$ from $\Omega(w)$ to $O(\log w)$ with high probability and also removes the restriction on the alphabet. At the other extreme, plugging in $\delta = \epsilon w$ for constant $\epsilon > 0$ in Theorem~\ref{thm:main_whp} we obtain a solution to the delayed SSWSI problem that uses $O(w)$ space and optimal constant time per character with high probability. All our results hold on a word RAM with logarithmic word size.

\subsection{Techniques}
We obtain our result for the timely variant, but without high probability guarantees, as follows. At all times we maintain at most $\log w$ suffix trees that do not overlap and together cover the window. The trees are organized by the \emph{log-structured merge technique}~\cite{OCGO1996}, where the rightmost tree is the smallest and their sizes increase exponentially towards the left. For each new character that arrives we append its suffix tree to the right side of our data structure. Whenever there are two trees of the same size next to each other we ``merge'' them by constructing a new suffix tree covering them both. Each character from $S$ is involved in at most $\log w$ merges and each merge takes expected linear time, so we spend expected amortized $O(\log w)$ time per character in $S$. We deamortize the updates by temporarily keeping both trees while merging them in the background. Note that for each adjacent pair of suffix trees we also store a suffix tree approximately covering them both, referred to as \emph{boundary trees} (see details below).

We find the occurrences of a pattern $P$ in the window by querying each of these trees, which takes $O(\log w)$ time per character in $P$. For adjacent pairs of trees larger than $|P|$ we find the occurrences of $P$ crossing from one into the other using the boundary trees. The remaining trees cover a suffix of the window of length $O(|P|)$, and we grow a suffix tree to answer queries in this suffix \emph{at query time}. Our data structure has some ``overhang'' on the left side of the window, and we use range maximum queries to report only the occurrences that start inside the window.

This solution is generalized to incorporate a delay of $\delta$ as follows. We store the $O(\log (w/\delta))$ largest trees from the timely solution and leave a suffix of size $\Theta(\delta)$ of the window uncovered by suffix trees. We answer queries as follows. If $|P| > \delta/4$ we say that $P$ is \emph{long}, and otherwise it is \emph{short}. For long patterns we do as in the timely case; the suffix tree we grow at query time now must also contain the uncovered suffix, but it still has size $O(|P|)$ since the uncovered part of the window has length $O(\delta) = O(|P|)$. We show how to do this in $O(\log (w/\delta))$ time per character in $P$. For short patterns we utilize that they are smaller than the delay to temporarily buffer the queries and later batch process them. We buffer up to $O(\delta\log(w/\delta))$ work and deamortize it over $\Theta(\delta)$ characters, obtaining the same bound as for long patterns. Updates run in the same bound since each character from $S$ is involved in at most $O(\log(w/\delta))$ merges before it leaves the window.

Finally, we improve the time bounds by proving that for any substring $S'$ of our window, we can construct the suffix tree over $S'$ in $O(|S'|)$ time with probability $1 - w^{-d }$ for any constant $d > 1$. We do so by reducing the alphabet $\Sigma' = \{c \in S'\}$ of $S'$ to rank-space $\{1,2,\ldots,|\Sigma'|\}$ from which the algorithm by Farach-Colton~et~al.~\cite{FFM2000} can construct the suffix tree in worst-case linear time. For large strings ($|S'| > w^{1/5}$) we pick a hash function $\Sigma \rightarrow [0,w^c]$ that with high probability is injective on $S'$, and then we use radix sort to reduce to rank-space in linear time. For small strings ($|S'| \leq w^{1/5}$) we pick a hash function $\Sigma \rightarrow [0,w/\log w]$ that is injective with (almost) high probability, and use this to manually construct a mapping into rank space in $O(S')$ time. This mapping algorithm uses additional $O(w\log w)$ space, but we construct at most $O(\log w)$ suffix trees at any time so the total space is linear. 

\subsection{Outline}
In Section~\ref{sec:preliminaries} we cover the preliminaries, including some useful facts about suffix trees. In Section~\ref{sec:real_time} we give a solution to the timely SSWSI problem that supports each operation in expected logarithmic time per character. In Section~\ref{sec:delay_ds} we show how to generalize this to incorporate delay, and in Section~\ref{sec:whp_ds} we show how to get good probability guarantees, proving Theorem~\ref{thm:main_whp}.

\section{Preliminaries}\label{sec:preliminaries}
 Given a string $X$ of length $n$ over an alphabet $\Sigma$, the $i$th character is denoted $X[i]$ and the substring starting at $X[i]$ and ending at $X[j]$ is denoted $X[i,j]$. The substrings of the form $X[i,n]$ are the \emph{suffixes} of $X$.

 A \emph{segment} of $X$ is an interval $[i,j] = \{i,i+1,\ldots,j\}$ for $1 \leq i \leq j \leq n$. We will sometimes refer to segments as strings, i.e., the segment $[i,j]$ refers to the string $X[i,j]$. The definition differs from ``substring'' by being specific about position; even if $X[1,2] = X[3,4]$ we have $[1,2] \neq [3,4]$. A \emph{segmentation} of $X$ is a decomposition of $X$ into disjoint segments that cover it. For instance, $x_1 = [1,i]$ and $x_2 = [i+1,n]$ is a segmentation of $X$ into two parts. The two segments $x_1$ and $x_2$ are \emph{adjacent} since $x_2$ starts immediately after $x_1$ ends, and for a pair of adjacent segments we define the \emph{boundary} $(x_1,x_2)$ to be the implicit position between $i$ and $i+1$. 

The \emph{suffix tree}~\cite{Weiner1973} $T$ over $X$ is the compact trie of all suffixes of $X\$$, where $\$ \not\in \Sigma$ is lexicographically smaller than any letter in the alphabet. Each leaf corresponds to a suffix of $X$, and the leaves are ordered from left to right in lexicographically increasing order. The suffix tree uses $O(n)$ space by implicitly representing the string associated with each edge using two indices into $X$. Farach-Colton et al.~\cite{FFM2000} show that the optimal construction time for $T$ is $\text{sort}(n, |\Sigma|)$, i.e., the time it takes to sort $n$ elements from the universe $\Sigma$.  For alphabets of the form $\Sigma = \{0,\ldots,n^c\}$ for constant $c \geq 1$ (that is, an integer alphabet of polynomial size in $n$) this implies that $T$ can be built in worst-case $O(n)$ time using radix sort. For larger alphabets we can reduce to the polynomial case in expected linear time using hashing, building $T$ in expected linear time (see Section~\ref{sec:whp_ds} for details). 

The \emph{suffix array} $L$ of $X$  is the array where $L[i]$ is the starting position of the $i$th lexicographically smallest suffix of $X$. Note that $L[i]$ corresponds to the $i$th leaf of $T$ in left-to-right order. Furthermore, let $v$ be an internal node in $T$ and let $s_v$ be the string spelled out by the root-to-$v$ path. The descendant leaves of $v$ exactly correspond to the suffixes of $X$ that start with $s_v$, and these leaves correspond to a consecutive range $[\alpha,\beta]_v$ in $L$.

We augment the suffix tree to support efficient pattern matching queries as follows. First, we use the well-known FKS perfect hashing scheme~\cite{FKS1984} to store the edges of the suffix tree, so we can for any node determine if there is an outgoing edge matching a character $a \in \Sigma$ in worst-case constant time. Note that this construction takes \emph{expected} linear time. Furthermore, we also build a \emph{range maximum query} data structure over $L$. This data structure supports range maximum queries, i.e., given a range $[\alpha,\beta]$ return the $j \in [\alpha,\beta]$ maximizing $L[j]$. It also supports range minimum queries, defined analogously. The data structure can be built in linear time and supports queries in constant time~\cite{GBT1984}. Finally, we preprocess the suffix tree in linear time such that each internal node $v$ stores the range $[\alpha,\beta]_v$ into $L$ corresponding to the occurrences of $s_v$. 

We can use this structure to efficiently find all the occurrences of $P$ in $O(|P| + \mathrm{occ})$ time, or the leftmost and rightmost occurrence of $P$ in $O(|P|)$ time. The \emph{locus} of a string $P$ is the minimum depth node $v$ such that $P$ is a prefix of $s_v$. Find the locus by walking downwards in the suffix tree, matching each character in $P$ in worst-case constant time using the dictionary. Once we have found $v$ we can report all the occurrences in $[\alpha,\beta]_v$ in $O(\mathrm{occ})$ time. Alternatively, we can find the rightmost occurrence of $P$ in constant time by doing a range maximum query on the range $[\alpha,\beta]_v$, which returns the $j \in [\alpha,\beta]_v$ maximizing the \emph{string position} $L[j]$. We can also find the leftmost occurrence by doing a range minimum query. 

Finally, note that it is possible to deamortize algorithms with \emph{expected} running time using the standard technique of distributing the work evenly. Specifically, if an algorithm runs in expected $\lambda n$ time we can do worst-case $\lambda$ work for $n-1$ steps; by linearity of expectation only expected $\lambda$ work remains for the last one.

\section{The Timely SSWSI Problem}\label{sec:real_time}
Here we present a solution for the timely variant that matches the bounds in Theorem~\ref{thm:main_whp} in expectation. Section~\ref{sec:whp_ds} shows how to get the bounds with high probability. Throughout this section we assume without loss of generality that $w$ is a power of two. Section~\ref{sec:real_time_amortized_updates} briefly mentions how to generalize to arbitrary $w$. 

The main idea is as follows.
We maintain a suffix of $S$ of length at least $w$.
This suffix is segmented into at most $\log w$ segments whose sizes are distinct powers of two, in increasing order from right to left.
The length of the suffix we store is at most $2^0 + \ldots  + 2^{\log w} = 2w - 1$. 
When a new character arrives, we append a new size-one segment to our data structure and merge equally-sized segments until they all have distinct sizes again.
We also discard the largest segment when it no longer intersects the window.
For each segment we store a suffix tree, and for every pair of adjacent segments we store a \emph{boundary tree} approximately covering them both (see below).
To support queries we query the suffix tree for each individual segment, and also each boundary tree.
%
For the segments larger than the pattern, the boundary trees are sufficient to find the occurrences crossing the respective boundary.
The remaining trees cover a suffix of $S$ that is $O(|P|)$ long, and we grow a suffix tree at query time to find the occurrences in this suffix. 

\subsection{Data Structure}\label{sec:real_time_ds}
At any point, the data structure contains a suffix $\sn$ of $S$ of length $w \leq |\sn| \leq 2w -1$ and a segmentation of $\sn$ into at most $\log w$ segments. Specifically, if $|\sn| = 2^{b_1} + \ldots + 2^{b_k}$ for integers $b_1 < \ldots < b_k$ then we have the segmentation $\sn_1,\ldots,\sn_k$ where $|\sn_i| = 2^{b_i}$, and $\sn$ is the concatenation of the strings $\sn_k,\sn_{k-1},\ldots,\sn_1$, in that order. The set $\{b_1,\ldots,b_k\}$ is unique and corresponds to the $1$-bits in the binary encoding of $|s|$. 
Three different configurations can be seen in Figure~\ref{fig:ds_log_example}.

For each segment $\sn_i$ we store the suffix tree $T_i$ over $\sn_i$, along with a range maximum query data structure over the suffix array of $\sn_i$. For each boundary $(\sn_{i+1},\sn_i)$ we store the \emph{boundary tree} $B_i$, which is the suffix tree over the substring centered at the boundary and extending $|\sn_i|$ characters in both directions. We augment $B_i$ with an additional data structure that we will use for reporting occurrences across the boundary. Let $\mathit{BL}_i$ be the suffix array corresponding to $B_i$. We define the \emph{modified suffix array} $\mathit{BL}'_i$ as 
\[
\mathit{BL}'_i[j] = \begin{cases}
    \mathit{BL}_i[i] & \text{if $\mathit{BL}[i]$ corresponds to a suffix starting in $s_{i+1}$}\\
    -\infty & \text{if $\mathit{BL}[i]$ corresponds to a suffix starting in in $s_{i}$}\\ 
\end{cases}
\]
We store a range maximum query data structure over $\mathit{BL}'_i$. Each of the data structures use $O(\sn_i)$ space, so the the whole data structures uses $O(\sn) = O(w)$ space. 

We note a few properties of the data structure. Let $S[n]$ be the most recent character to arrive and let $W_n = S[n-w+1,n]$ be the current window. Then $W_n$ is a suffix of $\sn$ since $|\sn| \geq w$. The largest, and leftmost, segment $\sn_k$ always has size $2^{\log w} = w$; it is not larger since $\log w$ bits are sufficient to represent $|s| \leq 2w - 1$, and it is always there since $|s| \geq w$ cannot be represented with $\log w - 1$ bits. For the same reason, $\sn_k$ always intersects at least partially with $W_n$, and each of $\sn_1,\ldots,\sn_{k-1}$ are fully contained in $W_n$. 
\begin{figure}
    \centering
    \includegraphics{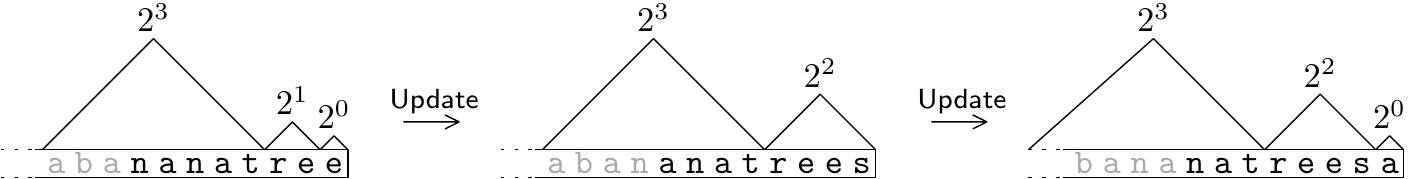}
    \caption{Example of updating the data structure with a window size of $w=8$. Here we illustrate the segments by the suffix trees built over them. Characters outside of the window are gray. As the character \texttt{s} arrives we construct a new suffix tree of size one, which is then immediately merged with the existing size-one suffix tree over \texttt{e} into a size-two suffix tree over \texttt{es}, which is then merged to into the final size-four suffix tree over \texttt{rees}. After receiving \texttt{a} we again have a size-one suffix tree. Note that after three more updates the suffix tree of size eight will no longer overlap the window and will be discarded.}
    \label{fig:ds_log_example} 
\end{figure}

\subsection{Queries}\label{sec:real_time_queries}
The idea is as follows, as exemplified in Figure~\ref{fig:query_summary}. Any occurrence of a pattern $P$ that is fully contained in a segment is found using the suffix tree over that segment. In the leftmost segment we must be careful to not report any occurrences that start before the left window boundary.
Similarly, any occurrence that only crosses a single boundary far enough away from the end of the window is found in the respective boundary tree.
The remaining occurrences are not contained in any of the trees in the data structure (either because they cross multiple boundaries or because they cross a single boundary $(s_{i+1},s_i)$ but start $>|s_i|$ characters to the left of the boundary). However, these occurrences are all located within a substring of size $O(m)$ ending at position $S[n]$, so we build, at query time, a suffix tree to find these occurrences. 

\begin{figure}
\centering
\includegraphics{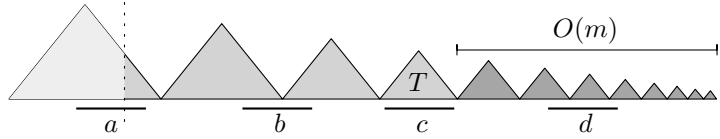}
\caption{Illustrates how we answer queries for a pattern $P$ of length $m$. The lines denoted $a$, $b$, $c$ and $d$ indicate occurrences of $P$. The segmentation is illustrated by the trees over the segments. The leftmost window boundary is marked with a vertical dashed line. Note that the leftmost segment intersects only partially with the window. The tree $T$ marks the smallest segment larger than $m$. The segments to the right of $T$ are all smaller than $P$, so they cover at most $m + m/2 + \ldots + 1 = O(m)$ characters. To answer the query for $P$ we match $P$ in the tree over each segment and in each boundary tree, and we also build a suffix tree over the segments smaller than $P$ at query time. We find $b$ because the respective boundary tree is sufficiently large. We find $c$ because it is fully contained in a segment. We find $d$ in the suffix tree that we build at query time. Note that $a$ is not contained in the window; we avoid reporting it by recursively using range maximum queries to find the \emph{rightmost} occurrence of $P$.}
\label{fig:query_summary}
\end{figure}

Let $P$ be the length-$m$ pattern being queried, $S[n]$ be the most recent character to arrive, and let $W_n$, the suffix $s$¸ the segmentation $\sn_1, \ldots, \sn_k$, and the indices $b_1 < \ldots < b_k$ be defined as above. 
As mentioned, any occurrence of $P$ in $W_n$ must either be fully contained within one of the segments, or it must cross the boundary between two adjacent segments.
We will show how to handle each of these cases separately. 

\paragraph{Fully Contained in a Segment}

Fix a specific segment $\sn_i$. As each character of $P$ arrives we match it in $T_i$. When the last character arrives we have a (possibly empty) range $[\alpha,\beta]$ into the suffix array of $\sn_i$ corresponding to the occurrences of $P$. If $\sn_i$ is not the leftmost segment then it fully contained in $W_n$ and we report all the occurrences. Otherwise, $\sn_i = \sn_k$ is the leftmost segment, which  might overlap only partially with $W_n$, and it may contain occurrences of $P$ that are not contained in the window. However, note that the intersection between $W_n$ and $\sn_k$ is a suffix of $\sn_k$. Therefore, if an occurrence of $P$ in $\sn_k$ \emph{starts} inside $W_n$ it also ends inside $W_n$. We find all such occurrences as follows. Let $L_k$ be the suffix array of $\sn_k$. As described in Section~\ref{sec:preliminaries} we find the index $j$ of the rightmost occurrence of $P$ by doing a range maximum query on the range $[\alpha,\beta]$ in $L_k$. If $L_k[j]$ is not inside $W_n$ then none of the occurrences are, and we are done. Otherwise we recurse on $[\alpha,j-1]$ and $[\beta,j-1]$. Matching $P$ in the trees of all the segments takes $O(\log w)$ overall time per character of $P$. Reporting each occurrence takes constant time since range maximum queries run in constant time.

\paragraph{Crossing a Boundary}
We show how to report the occurrences of $P$ that span a boundary. The main idea is as follows, as illustrated in Figure~\ref{fig:alternative_boundary_queries}. Let $\sn_i$ be the smallest segment where $|\sn_i| \geq m$. Consider any boundary $(\sn_{j+1},\sn_j)$ to the left of $\sn_i$, i.e., where $j \geq i$. Since both of these segments have size at least $|\sn_i| \geq m$, the boundary tree $B_j$ extends at least $m$ characters in both directions from the boundary. Therefore, all the occurrences of $P$ crossing the boundary are contained in $B_j$, and none of them can cross another boundary as well. Now consider the suffix $\mathcal{R}$ of $\sn$ containing the $m-1$ last characters of $\sn_i$ and extending to the end of $s$. This substring contains all the other boundary-crossing occurrences. Furthermore, all the occurrences in $\mathcal{R}$ cross at least one boundary since the longest consecutive part of a single segment in $\mathcal{R}$ is the $m-1$ characters in $\sn_i$. Note that the length of $\mathcal{R}$ is at most $m - 1 + |\sn_{i-1}| + |\sn_{i-1}|/2 + \ldots + 1 < m-1 + 2|\sn_{i-1}| < 3m$ since $|\sn_{i-1}| < m$. Thus, the number of boundary-crossing occurrences of $P$ equals the number of occurrences in $\mathcal{R}$ plus the number of occurrences crossing the boundaries $(\sn_{k},\sn_{k-1}), (\sn_{k-1},\sn_{k-2}),\ldots (\sn_{i+1},\sn_i)$.
    
The algorithm for finding the occurrences in the sufficiently large boundary trees is as follows. Fix a boundary $(s_{x+1},s_x)$. We match each character of $P$ in $B_x$ as it arrives. When the last character arrives we know if $|s_x| \geq m$, and also the range $[\alpha,\beta]$ corresponding to the occurrences of $P$ in the boundary tree. If $|\sn_x| \geq m$ (hence $x\ge i$) we report the occurrences as follows. As above we do a range maximum query to find the $j$ maximizing $\mathit{BL}_x'[j]$. If $\mathit{BL}_x'[j] = -\infty$ then all occurrences of $P$ start in $s_x$, and there are no occurrences crossing the boundary. Otherwise, $\mathit{BL}_x'[j]$ corresponds to the starting position of the rightmost occurrence of $P$ in $s_{i+1}$. Since all of $P$ has arrived and we now know $m$, we know that this occurrence crosses the boundary if and only if $\mathit{BL}'_x[j] \geq |\sn_x| - m + 2$ (recall that $B_x$ extends $|\sn_x|$ characters in both directions from the boundary). If it does not cross the boundary, then none of the other occurrences do either. Otherwise we report $\mathit{BL}'_x[j]$ and recurse on $[\alpha,j-1]$ and $[j+1,\beta]$ to find the remaining occurrences. Matching $P$ in all boundary trees takes $O(\log w)$ overall time per character, and reporting each occurrence with range maximum queries takes constant time.

We now show how to find the occurrences of $P$ in $\mathcal{R}$ with the same bounds. Assume that we know that $2^\ell \leq m < 2^{\ell+1}$ for some integer $\ell$. We build the suffix tree over the last $3\cdot2^{\ell+1}$ characters of $s$, deamortized over receiving the first $2^{\ell-1}$ characters of $P$. Over the next $2^{\ell-1}$ characters we match $P$ in the tree, at a rate of two characters per new character from $P$. Then, when the $2^\ell$th character arrives, we have caught up to the stream $P$, and we match the remaining $m-2^\ell$ characters as they arrive.  When the last character arrives we have matched $P$ in a tree of size at least $3m$, and we can start reporting occurrences. Note that we are overestimating the size of the tree, and it potentially includes some occurrences of $P$ that are contained in $\sn_i$. To avoid reporting these, we also build a range maximum query data structure over the suffix array such that we can use recursive range maximum queries. When deamortized, we construct the tree in expected constant time per character of $P$. Matching $P$ also takes constant time per character. We know that $m \leq w$, so we run this algorithm simultaneously for each of the $\log w$ different choices choices for $\ell$, using expected $O(\log w)$ time per character in $P$. Note that the trees use $O(w)$ space in total since the sum of the space it is a geometric sum where the largest term is $O(w)$.

\begin{figure}
    \centering
    \includegraphics[width=\linewidth]{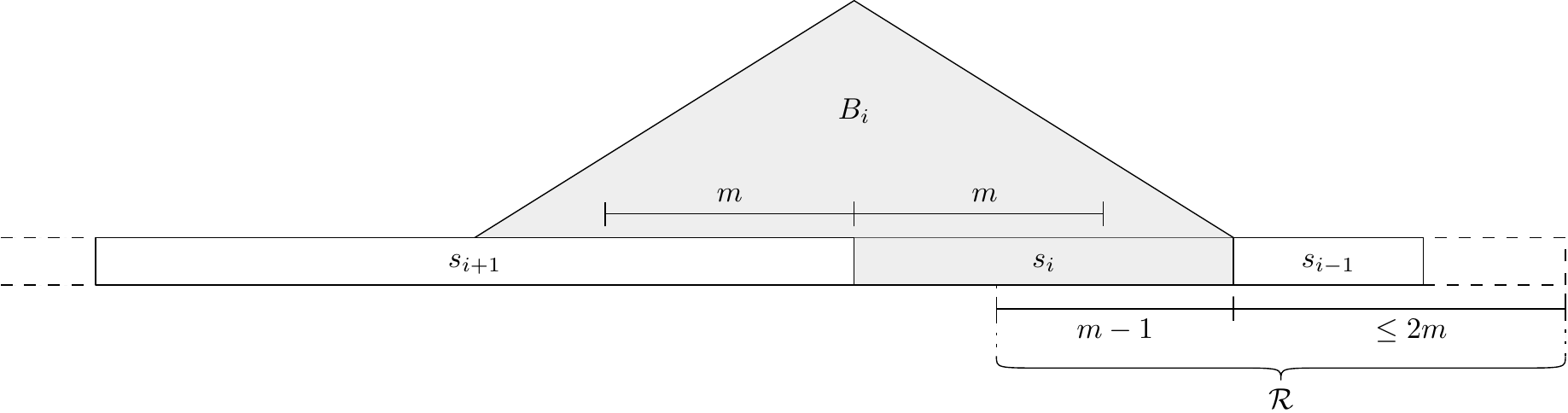}
    \caption{The segment $\sn_i$ is the smallest segment where $|\sn_i| \geq m$. For each boundary $(\sn_{j+1},\sn_j)$ where $j \geq i$, the tree $B_j$ is large enough to find all occurrences of $P$ across the boundary. All other occurrences of $P$ that cross a boundary must be in $\mathcal{R}$, the string covering the $m-1$ rightmost characters of $\sn_i$ and extending to the end of the window. The length of $\mathcal{R}$ is no more than $m-1 + |\sn_{i-1}| + |\sn_{i-1}| / 2 + \ldots + 1  < 3m $. }
    \label{fig:alternative_boundary_queries}
\end{figure}

\subsection{Amortized Updates}{\label{sec:real_time_amortized_updates}}
We show how to support updates in amortized $O(\log w)$ time. Let $S[n]$ be the last character to arrive and as in the description of the data structure let $b_1 < b_2 < \ldots < b_k$ be the positions of the $1$-indices in the binary encoding of $|\sn|$. When the new character $c = S[n+1]$ arrives, we update $\sn$ and the segmentation $\sn_1\ldots \sn_k$ to create the new suffix $\sn'$ with the new segmentation $\sn'_1,\ldots \sn'_{k'}$. See Figure~\ref{fig:ds_log_example} for an example.

 If $|s| < 2w-1$ then we set $s' = sc$. The segmentation of $s'$ corresponds to the unique binary encoding of $|s'| = |s| + 1$, so we update the segmentation analogously to a ``binary increment''. One way to do so is as follows. We create a new segment of size one over $c$. If there was not already a segment of size one, then we add the new segment and we are done. Otherwise we \emph{merge} (see below) the two size-one segments to create a segment of size two. The process cascades until we reach a size $2^b$ that does not exist in the segmentation of $\sn$ (i.e., the smallest index $b \not\in \{b_1,\ldots,b_k\}$). At this point we replace all of the segments $\sn_{b-1},\ldots,\sn_{1}$ with $\sn'_1$ covering the last $2^b$ characters of $\sn'$. The remaining segments for $\sn'$ are the same as the segments $\sn_{b+1},\ldots,\sn_k$. If $|\sn| = 2w - 1$ then there is a segment of each size $2^0, 2^1, \ldots, 2^{\log w}$. Since the segments have decreasing size from left to right, the {$\log w - 1$} rightmost segments cover the last $2^0 + \ldots + 2^{\log w - 1} = w-1$ characters of $\sn$. Thus, after $c$ arrives, the leftmost segment of size $2^{\log w} = w$ no longer intersects the window. We remove it by setting $\sn' = \sn[w+1,|\sn|]c$, and update the segmentation as above. 

 Let $a$, $b$ and $c$ be three adjacent segments, in that order. To \emph{merge} $b$ and $c$ we combine them into a new segment $d$ that spans them both, construct the suffix tree over $d$, and construct a range maximum query data structure on the suffix array of $d$. Furthermore, since $a$ and $d$ are now adjacent we also construct the boundary-spanning suffix tree for the boundary $(a,d)$ that extends $|d|$ characters in each direction. The construction of all of these data structures takes expected $O(|d|)$ time (see Section~\ref{sec:preliminaries}). Thus, it takes expected constant time per character every time it moves into a new, larger segment. Each character is contained in at most $\log w$ segments before it leaves the window, so the amortized update time is expected $O(\log w)$ per character. 

Note that all but the last merge are unnecessary to actually compute $\sn'_1$; in the amortized setting we can simply determine where the cascade will end and immediately construct the suffix tree over the corresponding segment. However, the cascading merges will come into play in the dearmotized variant.

Also note that if $w$ is not a power of two we can use a similar scheme where we allow either two simultaneous trees of size $2^{\lfloor \log w \rfloor}$, or one tree of size $2^{\lceil \log w \rceil}$. In both cases, there are some straightforward edge cases for when to remove the leftmost segment.

\subsection{Deamortized Updates}\label{sec:real_time_deamortized}
We now show how to deamortize the updates.
Unfortunately the previous construction cannot be directly deamortized since the suffix tree construction algorithm by Farach-Colton et al.~\cite{FFM2000} requires access to the whole string. Therefore, if a new character $c$ causes a cascade of merges resulting in a new segment of size $2^i$ we have to build the suffix tree over that segment when $c$ arrives.

Instead, we modify the structure slightly. When two segments of size $2^i$ become adjacent we temporarily keep both while deamortizing the cost of merging them over the \emph{next} $2^i$ characters of $S$, doing expected constant work per character. Note that queries are unaffected, with one exception for reporting occurrences across the boundaries; there might now be two adjacent segments $\sn_{i+1}$ and $\sn_i$ of the same size that are both the smallest segment at least as large as $|P|$. In this case the suffix $\mathcal{R}$ extends only $m-1$ characters into the rightmost segment $\sn_i$. The boundary tree for $(\sn_{i+1}, \sn_i)$ is large enough to report all occurrence crossing that boundary since both segments have size at least $|P|$. Furthermore, $\mathcal{R}$ potentially becomes twice as long, so we adjust the constants of the trees that we grow at query time.

To bound the time for updates we show that we are constructing at most $\log w$ suffix trees at any point, from which it follows that the update time is expected $O(\log w)$.
To do so we show the following lemma. 

\begin{lemma}\label{lem:deamortized-structure}
When the construction of a segment of size $2^i$ finishes there is exactly one segment of each size $2^{i-1}, \ldots, 2^0$.
\end{lemma}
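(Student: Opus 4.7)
The plan is to induct on $i$. The base case $i=1$ is immediate: the merge producing a size-$2$ segment (combining two size-$1$ segments) takes one character, and when it completes the newly arrived character is a lone size-$1$ segment, giving exactly one segment of size $2^0$ as required.

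For the inductive step, fix $i \geq 2$ and assume $L(j)$ holds for all $1 \leq j < i$. Write $\Phi(j)$ for the configuration consisting of exactly one segment of each size $2^{j-1}, \ldots, 2^0$ (so $L(j)$ says that the sub-$2^j$ region is in state $\Phi(j)$ the moment a size-$2^j$ segment finishes constructing). Consider the merge producing the size-$2^i$ segment: it starts at the time $t$ when two size-$2^{i-1}$ segments become adjacent (the right one having just been constructed by a completing $(i-2 \to i-1)$ merge) and completes $2^{i-1}$ characters later. Applying $L(i-1)$ at time $t$, the sub-$2^{i-1}$ region immediately to the right of the merging pair is in state $\Phi(i-1)$. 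During $[t, t+2^{i-1}]$ the outer merge is locked on its two input segments, so every smaller merge in this interval happens independently inside this sub-$2^{i-1}$ region.

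The central subclaim is that, starting from $\Phi(i-1)$ at time $t$ and processing $2^{i-1}$ characters, the sub-$2^{i-1}$ region ends in state $\Phi(i-1)$ again and a fresh size-$2^{i-1}$ segment is constructed at exactly time $t+2^{i-1}$. I would establish this by a nested induction (equivalently by strengthening the lemma to also record the creation time of the new top-level segment). Split $[t, t+2^{i-1}]$ into two halves of $2^{i-2}$ characters: in the first half the nested inductive argument applied to sub-$2^{i-2}$ (starting from the $\Phi(i-2)$ part of $\Phi(i-1)$) produces a new size-$2^{i-2}$ segment adjacent to the existing one at time $t+2^{i-2}$, launching an $(i-2 \to i-1)$ merge of duration $2^{i-2}$; in the second half the same argument applied to the refreshed $\Phi(i-2)$ to its right produces a second size-$2^{i-2}$ at $t+2^{i-1}$ while the pending $(i-2 \to i-1)$ merge completes in synch. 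The main obstacle is verifying the alignment of these nested merge durations; it works out because a merge of two size-$2^k$ segments takes exactly $2^k$ characters, so the geometric doubling of merge durations dovetails with the deamortization schedule.

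Putting the pieces together, at time $t+2^{i-1}$ three completions coincide: the outer $(i-1 \to i)$ merge yields the new size-$2^i$ segment; the inner $(i-2 \to i-1)$ merge yields a new size-$2^{i-1}$ immediately to its right; and by $L(i-1)$ applied to this newly constructed size-$2^{i-1}$, the sub-$2^{i-1}$ region further to the right is once again $\Phi(i-1)$. Concatenating these pieces yields exactly one segment of each size $2^{i-1}, 2^{i-2}, \ldots, 2^0$, which is $\Phi(i)$, establishing $L(i)$ and completing the induction.
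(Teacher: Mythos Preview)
Your proposal is correct and follows essentially the same inductive approach as the paper. The only difference is cosmetic: where you establish the ``central subclaim'' (that a fresh size-$2^{i-1}$ segment appears exactly $2^{i-1}$ characters after the outer merge begins) via a nested induction splitting the interval into halves, the paper obtains this directly by summing the durations along the single cascade chain, $1 + 2^0 + 2^1 + \cdots + 2^{i-2} = 2^{i-1}$, and then applies $L(i-1)$ once more at the end. Both arguments track the same sequence of merges and arrive at the same conclusion.
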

\begin{proof}
    The proof is by induction on $i$. 
    For $i = 1$, when two size-one segments become adjacent we merge them when the next character $c$ from $S$ arrives. This results in a segment of size two, as well as a size-one segment containing $c$, proving the base case. 
    
    Inductively, consider the first time two segments of size $2^i$ become adjacent. 
    By the induction hypothesis, there is one segment of each size $2^0, 2^1, \ldots, 2^{i-1}$ to the right of these two segments. For another segment of size $2^i$ to be constructed, we must first receive one more character, which triggers a merge that eventually cascades through all $i-1$ of these segments. 
    For this to happen, $1 + (2^0 + 2^1 + \ldots + 2^{i-1}) = 2^i$ more characters from $S$ must arrive, where the $1$ is for the next character to arrive, and $2^j$ is the amount of characters the $j$th merge is deamortized over. However, at this point the merge of the two segments of size $2^i$ is complete, so we constructed two new segments, one of size $2^{i+1}$ and one of size $2^{i}$.
    By the induction hypothesis, there is also one segment of each size $2^0,\ldots, 2^{i-1}$, concluding the proof.  
\end{proof}

Lemma~\ref{lem:deamortized-structure} implies that there are never more than two segments of the same size adjacent to each other, and therefore at most one merging process for each segment size $2^0, 2^1, \ldots, 2^{\log w}$. To see this, consider the first time two segments $a$ and $b$ of size $2^i$ are adjacent. 
At this point, there are $2^0 + 2^1 + \ldots + 2^{i-1} = 2^i - 1$ characters to the right of $b$.
When the next segment $c$ of size $2^i$ arrives there are $2^i-1$ characters to the right of that, too. 
But then there are $|c| + 2^i-1 = 2^i + 2^i - 1$ characters to the right of $b$. Thus $2^i$ new characters must have arrived in the meanwhile, and the merging of $a$ and $b$ is done.

We obtain the following theorem.
\begin{theorem}{\label{thm:timely_result_expectation}}
Let $S$ be a stream and let $w \geq 1$ be an integer. We can solve the \problem{} problem with an $O(w)$ space data structure that supports $\Update$ and $\Report$ in expected $O(\log w)$ time per character. Furthermore, $\Report$ uses additional worst-case constant time per reported occurrence.      
\end{theorem}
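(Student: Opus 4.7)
The plan is to combine the three components developed in Sections~\ref{sec:real_time_ds}--\ref{sec:real_time_deamortized}: the hierarchical segmentation with its segment trees, boundary trees, and range-maximum structures; the three-case query algorithm; and the deamortized merging schedule analyzed in Lemma~\ref{lem:deamortized-structure}. The theorem then follows by bounding space, $\Update$ time, $\Report$ time, and reporting time separately.

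For space, the stored suffix $\sn$ has length at most $2w-1$, so the segment trees $T_i$ together with their suffix arrays and range-maximum structures use $O(w)$ space. Each boundary tree $B_i$ has length $2|\sn_i|$ and the modified suffix array $\mathit{BL}'_i$ is linear in this size; charging $B_i$ to $\sn_i$ again yields $O(w)$. By Lemma~\ref{lem:deamortized-structure}, at each size $2^0,2^1,\ldots,2^{\log w}$ at most one merge is pending, each holding partial data of size proportional to the corresponding segment, so the background work contributes another geometric sum in $O(w)$. The query-time trees grown in parallel for each scale $\ell$ with $2^{\ell} \leq m < 2^{\ell+1}$ have sizes doubling in $\ell$ and therefore total $O(w)$ across all scales.

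For \Update, every character of $S$ participates in at most $\log w$ merges before leaving the window, since each merge at least doubles the size of the segment containing it. By Lemma~\ref{lem:deamortized-structure} at most $\log w$ merges are active simultaneously, and a merge producing a segment of size $2^i$ is deamortized over the next $2^i$ incoming characters at expected constant work per character, using the expected-linear suffix tree and FKS dictionary constructions from Section~\ref{sec:preliminaries}. Summing over the at most $\log w$ simultaneous merges gives expected $O(\log w)$ time per \Update.

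For \Report, each arriving character of $P$ descends one edge in each segment tree, each boundary tree, and each of the $O(\log w)$ query-time trees, with every edge step taking worst-case constant time via the FKS dictionary, while the background construction of the query-time trees contributes expected constant time per character by deamortization. Correctness follows from the case analysis already given: occurrences fully inside a segment are captured by its segment tree, with the recursive range-maximum scan on the leftmost segment's suffix array filtering out occurrences that start before the window; occurrences crossing a boundary whose smaller side has size at least $m$ are captured by the corresponding boundary tree through recursive range-maximum queries on $\mathit{BL}'_i$, which by construction restricts reports to suffixes starting in the left segment and rules out matches that do not actually cross; and the remaining boundary-crossing occurrences lie inside $\mathcal{R}$ and are found in the query-time tree for the correct scale $\ell$, with range-maximum queries used again to suppress duplicates that belong to the $\sn_i$ overhang. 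Each range-maximum query runs in worst-case constant time, giving $O(1)$ per reported occurrence. The main obstacle I anticipate is this bookkeeping: verifying that the leftmost-segment filter, the $\mathit{BL}'_i$ convention, and the overshoot in $\mathcal{R}$ jointly ensure that every in-window occurrence is reported exactly once across the three cases.
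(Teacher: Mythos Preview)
Your proposal is correct and follows exactly the paper's approach: the theorem is stated in the paper without a separate proof, as a direct consequence of the data structure in Section~\ref{sec:real_time_ds}, the query algorithm in Section~\ref{sec:real_time_queries}, and the deamortized update analysis in Section~\ref{sec:real_time_deamortized} culminating in Lemma~\ref{lem:deamortized-structure}. Your summary of the space bound (geometric sums over segment, boundary, pending-merge, and query-time trees), the $\Update$ bound (at most $\log w$ simultaneous merges, each doing expected constant work per character), the $\Report$ bound (constant-time descent in each of $O(\log w)$ trees plus expected constant background construction), and the $O(1)$-per-occurrence reporting via recursive range-maximum queries matches the paper's development faithfully.
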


\section{The Delayed SSWSI Problem}\label{sec:delay_ds}
In this section we show how to improve the result from Section~\ref{sec:real_time} if we are allowed a delay of $\delta$. The main idea is as follows. As before, we maintain suffix trees of exponentially increasing sizes, although only the $O(\log (w/\delta))$ largest of them. As a result there are fewer trees to query, but also an \emph{uncovered} suffix of size $\Theta(\delta)$ of the window for which we do not have any suffix trees. As in Section~\ref{sec:real_time} we denote the part of $S$ covered by suffix trees by $\sn$ and we denote the uncovered suffix by $t$. As above, $\sn$ is segmented into $\sn_1,\ldots,s_k$. 

We will first explain how to solve the problem when all patterns are \emph{long}, that is, $|P| > \delta/4$,  and then when all patterns are \emph{short}, that is, $|P| \leq \delta/4$. Finally we show how to combine these solutions. When all the patterns are long we can afford to construct, at query time, a suffix tree covering $t$. On the other hand, when all the patterns are short we can do both updates and queries in an offline fashion; we buffer queries and updates until we have approximately $\delta/2$ operations to do, at which point we can afford to construct a suffix tree over $t$ in a deamortized manner. See Figure~\ref{fig:delayed_queries} for an example.

Throughout this section we assume without loss of generality that $\delta$ is a power of two. Otherwise we instead use a more restrictive delay of $\delta' = 2^{\lfloor \log \delta \rfloor}$ and achieve the same asymptotic bounds.

\subsection{Long Patterns}
We first show how to support queries if all patterns have a length $m > \delta / 4$.
We modify the data structure from Section~\ref{sec:real_time} slightly. 
The smallest tree now has size $\delta/2$ as opposed to $1$, so there are $\Theta(\log w - \log(\delta/2)) = O(\log (w/\delta))$ segments and boundary trees. 
The  uncovered suffix $t$ has length at most $\delta$.

We answer queries the same way as in Section~\ref{sec:real_time_queries}, with only small modifications.
Let $P$ be a pattern of length $m > \delta / 4$.
As before, let $\sn_i$ be the smallest and rightmost segment with $|\sn_i| \geq m$.
We find any occurrence within a segment or crossing a single boundary by using the suffix trees over each segment and the boundary trees to the left of $\sn_i$, as before. 
The remaining occurrences we again find by growing suffix trees of exponentially increasing sizes from the right window boundary.
The only change is that we now grow the trees faster, as we must now also cover $t$, and we can afford to let the smallest tree have size $\delta$ since we have $m > \delta / 4$ characters in the pattern to deamortize the work over.
As above, let $\mathcal{R}$ be the string covering the $m-1$ last characters of $\sn_i$ and extending to the right window boundary, which now also includes $t$.
As $|t| < \delta$ the length of $\mathcal{R}$ is $|\mathcal{R}| < 3m + \delta < 7m$.
%
Assuming $2^{\ell} \leq m < 2^{\ell+1}$, we build the suffix tree of size $7 \cdot 2^{\ell+1}$ and match $P$ in it, amortized over the characters of $P$.
As we have $m > \delta / 4$ characters to deamortize the work over, we only do this for each choice of $\ell$ where $2^{\ell + 1} \geq \delta$, which results in $O(\log w - \log \delta) = O(\log (w / \delta))$ work per character in $P$.
As in Section~\ref{sec:real_time_queries} we use recursive range maximum queries to avoid double reporting any occurrences of $P$ that are also in $\sn$.

%
%
As there are also only $O(\log (w / \delta))$ segments and boundary trees we spend $O(\log (w / \delta))$ time per character in $P$. Note that we answer these queries without delay.

Updates are performed as follows. 
For each segment of $\delta/2$ characters that arrives we construct the suffix tree over it, deamortized over the next $\delta/2$ characters of $S$. 
We merge suffix trees as before, also deamortized over new characters of $S$.
The induction proof from Section~\ref{sec:real_time_deamortized} still works by modifying the base case; the merging of two trees of size $\delta/2$ takes $\delta/2$ characters, at which point another tree of size $\delta/2$ is constructed. The inductive step follows from the fact that $\delta$ is a power of two. Thus, we spend expected $O(\log(w/\delta))$ time per update.

\subsection{Short Patterns}
We now show how to support queries if all patterns have a length $m \leq \delta$.
We extend the data structure with a buffer of size $\delta$. 
This buffer will contain queries that we have not yet answered and characters for $S$ that we have not yet processed.
The total space is still $O(w + \delta) = O(w)$.
    
Whenever a character from $S$ arrives we append it to both $t$ and to the buffer. 
When a pattern arrives we append the full pattern to the buffer, and along with it we store the current position of the right window boundary.
Once the buffer has more than $\delta/2$ characters (patterns and text combined) we immediately allocate a new buffer of size $\delta$ and \emph{flush} the old buffer as follows. 
Note that at this point there are strictly less $\frac{3}{4}\delta$ characters in the buffer since each pattern is short. 

When we flush the buffer, we first answer all the buffered queries, and then we process all the buffered updates. 
We deamortize this work over the next $\delta/4$ characters that arrive from either stream. 

To answer the buffered queries we do as follows. 
Let $P_1,\ldots,P_\ell$ be the patterns in the buffer, let $m_i = |P_i|$, and let $M = \sum_{1\leq i \leq \ell} m_i$.
We have $M < \delta$. We start by building a suffix tree over $t$, along with a range minimum query data structure over the suffix array of $t$. This takes expected $O(\delta)$ time.
An occurrence of $P_i$ is either contained in $\sn$, or it crosses the boundary $(\sn,t)$, or it is contained in $t$. 
Since $P_i$ is smaller than each segment $\sn_j$ we can find all the occurrences within $\sn$ using the suffix trees over the segments and the boundary trees in $O(m_i\log(w/\delta))$ time. To find the occurrences crossing the boundary we build the KMP matching automaton~\cite{KMP1977} for $P_i$. In it we match the string that is centered at the boundary $(s,t)$ and extends $m_i-1$ characters in each direction. This takes $O(m_i)$ time. To find the the occurrences in $t$ we match $P_i$ in the suffix tree over $t$ in $O(m_i)$ time. In total, this takes $O(M\log(w/\delta)) = O(\delta\log(w/\delta))$ time for all the patterns, or expected $O(\log(w/\delta))$ time per character when deamortized. 
Note however, that after $P_i$ arrived more characters from $S$ could have arrived and been appended to $t$.
We must therefore take care not to report any occurrences of $P_i$ that extend past what \emph{was} the right window boundary when $P_i$ arrived. The KMP automaton finds the occurrences in left-to-right order, and in $t$ we avoid reporting too far right using recursive range minimum queries. 

Finally, we process each update in the buffer in the order they arrived, using the same procedure as for long patterns.
This takes $O(\log(w/\delta))$ time per update and $O(\delta\log(w/\delta))$ time in total.
Thus flushing the buffer takes expected $O(\log (w/\delta))$ time per character since we deamortize the expected $O(\delta\log(w/\delta))$ work over $\delta/4$ characters.
Since we allocate a new buffer immediately when we begin flushing, we will complete the flush before the next flush begins. 

\begin{figure}
    \centering
    \includegraphics[width=\textwidth]{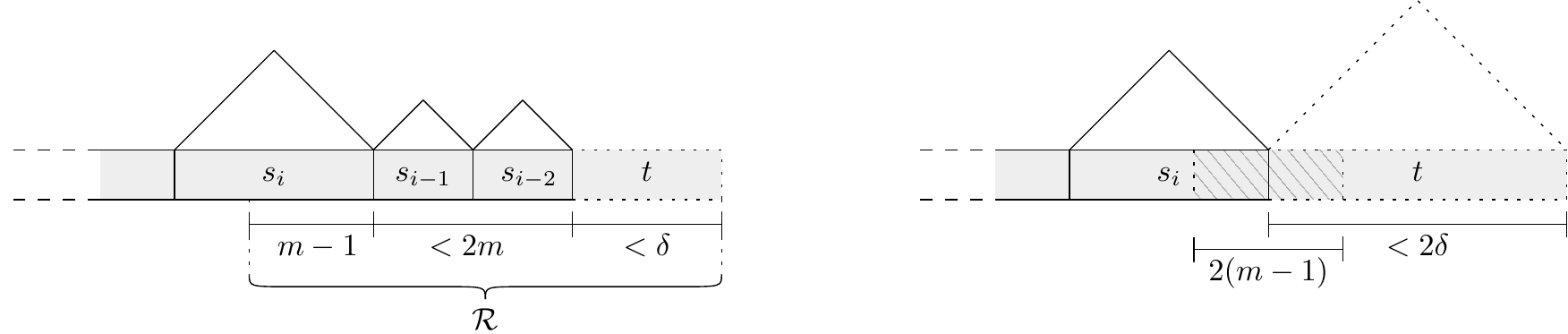}
    \caption{
    Left: Example of query with a long pattern. 
    Here $\sn_i$ is the smallest and rightmost segment with $|\sn_i| \geq m$.
    Note that the non-indexed suffix $t$ is less than $\delta < 4m$ characters long.
    Right: Example of a query with a short pattern.
    Note that for short patterns, $\sn_i$ is always the rightmost segment. 
    Any occurrence in $s$ cross at most a single boundary and is found using the constructed trees.
    Any occurrence in $t$ is found by the suffix tree over $t$ that we construct when we flush the buffer.
    Any occurrence that cross the boundary $(s, t)$ is found by the KMP automaton we build over the substring the extends $m-1$ characters in both directions from the boundary, which is hatched in the figure.
    } 
    \label{fig:delayed_queries}
\end{figure}

\subsection{Both Long and Short Patterns}
We now show how to combine the solutions short and long patterns, to obtain a solution that handles patterns of any length.
The data structure is the same as for small patterns above.
As above, we append each new character to the buffer.
However, whenever we start streaming a pattern we also proceed as if $P$ were long.
If $P$ turns out to fit in the buffer without triggering a flush (which might also happen if $P$ is long), we simply discard the work we did for the long-pattern case.
However, if adding $P$ to the buffer results in more than $\frac{3}{4}\delta$ characters being in the buffer, then $P$ must be long.
We immediately start flushing the buffer (ignoring the characters related to $P$) and also continue processing $P$ as a long pattern.
Note that since we are potentially streaming a long pattern while batch processing the updates in the buffer, the data structure might change while we are matching in it.
However, it only changes when a merge finishes, replacing a pair of suffix trees by a larger tree.
If this happens we keep the old trees in memory until we are done processing the pattern, at which point we discard them.

We obtain the following theorem.
\begin{theorem}{\label{thm:delayed_result_expectation}}
Let $S$ be a stream and let $w \geq 1$ and $\delta \geq 1$ be integers. We can solve the \dproblem{\delta} problem with an $O(w)$ space data structure that supports $\Update$ and $\Report$ in expected $O(\log (w/\delta))$ time per character. Furthermore, $\Report$ uses additional worst-case constant time per reported occurrence. 
\end{theorem}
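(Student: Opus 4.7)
The plan is to prove each of the three claimed bounds (space, update time, query time) for the combined data structure by composing the analyses already developed for the long-pattern and short-pattern solutions. The key novelty to verify is the speculative handling of each incoming pattern $P$: we append $P$ to the buffer while simultaneously starting the long-pattern procedure, and then commit to one strategy or the other based on whether $P$ overflows the buffer. I would first verify the space bound, then the update time, then the query time (split into the long and short cases), and finally handle the subtle case where a long pattern arrives while a buffer flush is in progress.

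For space, summing the components gives $O(w)$: the segmentation $\sn_1, \ldots, \sn_k$ has $\sum |\sn_i| \leq 2w$; the boundary trees $B_j$ use $O(|\sn_j|)$ space each, yielding a geometric sum bounded by $O(w)$; the suffix trees grown at query time for a long pattern have sizes $\Theta(\delta), \Theta(2\delta), \ldots, O(w)$, again a geometric sum bounded by $O(w)$; and the buffer uses $O(\delta) = O(w)$. For updates, the cost of extending the segmentation to the right and of the cascaded merges is expected $O(\log(w/\delta))$ per character, by the same induction as in Lemma~\ref{lem:deamortized-structure} but with the base case modified so that the smallest segment has size $\delta/2$ (the inductive step goes through because $\delta$ is assumed to be a power of two). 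Flushing the buffer contributes an additional expected $O(\delta \log(w/\delta))$ work that is deamortized over the next $\delta/4$ incoming characters, again expected $O(\log(w/\delta))$ per character.

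For queries I would split on $|P|$. If $|P| \leq \delta/4$ then appending $P$ to the buffer cannot push it above $3\delta/4$, so $P$ is processed during the next flush using the short-pattern procedure; its cost is expected $O(|P|\log(w/\delta))$, which is charged to its own characters, and the answer is returned within the $\delta$-delay window because the flush completes within the next $\delta/4$ characters. If $|P| > \delta/4$ then appending $P$ eventually pushes the buffer past $3\delta/4$, at which point we initiate a flush of the old buffer while continuing to process $P$ via the long-pattern procedure; the speculative long-pattern work done before this point is only a constant factor per character and is discarded if $P$ turns out to be short. The long-pattern procedure matches $P$ against $O(\log(w/\delta))$ segment and boundary trees, plus a query-time tree of size $O(|P| + \delta) = O(|P|)$ grown and matched against at a rate of $O(\log(w/\delta))$ per character, and recursive range maximum queries as in Section~\ref{sec:real_time_queries} ensure that each occurrence is reported exactly once and that occurrences outside the window are not reported.

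The main obstacle, requiring the most care, is reconciling a long-pattern match with a concurrent flush: while we are matching $P$, a merge might complete and replace two adjacent trees $T_{j+1}, T_j$ in the segmentation by a single larger tree. I would handle this by delaying the retirement of the replaced trees until $P$ has been fully processed, so that the collection of trees matched against is stable throughout $P$'s streaming. By Lemma~\ref{lem:deamortized-structure} at most one merge is in progress per size class, so the retained trees across all size classes form a geometric sum bounded by $O(w)$ space, preserving the space bound. Finally, for the short-pattern side, I need to confirm that the KMP automaton and the range minimum queries on the suffix array of $t$ correctly suppress occurrences lying beyond what \emph{was} the right window boundary when each buffered $P_i$ arrived; this is immediate from the left-to-right matching order of the KMP automaton and the same recursive-extremum technique used elsewhere, completing the proof.
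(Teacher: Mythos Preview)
Your proposal is correct and follows essentially the same approach as the paper's Section~4: speculatively run the long-pattern procedure while buffering, commit based on buffer overflow, and retain replaced trees across a concurrent merge. One small inaccuracy worth fixing: it is not true that $|P|>\delta/4$ necessarily pushes the buffer past $3\delta/4$ (e.g.\ $|P|=\delta/3$ arriving into an empty buffer), so the operative case split in the combined algorithm is whether appending $P$ overflows the buffer, not the length of $P$ itself; a long $P$ that does not overflow is simply left in the buffer and handled by the short-pattern flush, whose cost and delay analysis you already have and which does not actually rely on $|P|\le\delta/4$.
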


\section{Obtaining High Probability}\label{sec:whp_ds}
In this section we show how to improve the time bounds to $O(\log (w/\delta))$ with probability $1 - w^{-d}$ for any constant $d \geq 1$. 

The expectation in the time bounds in Section~\ref{sec:delay_ds} comes from the construction of suffix trees (recall that we also build suffix trees at query time). Below, in Lemma~\ref{lem:suffix_tree_construction}, we prove that given a string $\mathcal{K}$ of length $k = O(w)$ we can construct the suffix tree over $\mathcal{K}$ in $O(k)$ time with probability $1 - 1/w^{1 + \epsilon}$, using additional $O(w/\log w)$ space. We use this algorithm to construct suffix trees during updates and queries, deamortizing them as before and doing $O(\log (w/\delta))$ work per character that arrives. When a new character arrives from $S$ or $P$, at most $O(\log(w/\delta)) = O(\log w)$ suffix tree constructions will finish. At this point, we finish constructing those trees that did not finish in time, that is, used more more time than what was allotted to them. By the union bound, the probability that any of them fail to finish in time (and thus incurring extra construction cost) is no more than $c\log w/w^{1+\epsilon}$ for some constant $c$ which is no more than $1/w$ for large $w$. Thus, for each character from $S$ or $P$ we spend $O(\log (w/\delta))$ time with high probability in $w$. We obtain the $1- 1/w^d$ probability bound by probability boosting, running $d = O(1)$ independent copies of the construction algorithm simultaneously. The algorithm from Lemma~\ref{lem:suffix_tree_construction} uses additional $O(w/\log w)$ space, but we are never constructing more than $O(\log w)$ suffix trees, so the space usage is $O(w)$ in total.

Furthermore, as mentioned in Section~\ref{sec:preliminaries}, we previously used an FKS dictionary~\cite{FKS1984} to store the edges to support reporting queries in worst-case constant time per character in the pattern.
The construction time of this dictionary is expected linear, so it can no longer be used.
Instead we use a dictionary by Dietzfelbinger and Meyer auf der Heide~\cite{DH1990}.
If there are $n$ elements in the dictionary it supports searches in worst-case constant time and any sequence of $\frac{1}{2}n$ updates takes constant time per update with probability $1 - 1/n^{d'}$ for any constant $d' \geq 1$.
We store all the edges of all the suffix trees in one such dictionary.
At all times, we keep $\Theta(w)$ dummy-elements in the dictionary to ensure that we get good probability bounds in terms of $w$, and we choose $d'$ large enough that any sequence of $O(w)$ operations (e.g., the construction of any one of our suffix trees) runs in $O(w)$ time with probability $1 - 1/w^{d + \epsilon}$.
     
\paragraph{Universal Hashing}
Before we prove Lemma~\ref{lem:suffix_tree_construction} we restate some basic facts about universal hashing, introduced by Carter and Wegman~\cite{CW1979}. Let $M,m > 0$ be integers, $\mathcal{H}$ be a set of functions $[0,M] \rightarrow [0,m]$, and $h \in \mathcal{H}$ be selected uniformly at random. Then  $\mathcal{H}$ is \emph{universal} if $P[h(x) = h(y) \mid x \neq y] \leq 1/m$.  Let $R \subseteq [0,M]$ and $|R| = r$. It follows from the union bound that $h$ has a \emph{collision} on $R$ with probability at most
    \begin{equation}\label{eq:prob_not_injective}
        P[h(x) = h(y) \text{ for some } x \neq y] \leq \sum_{x\neq y \in R} P[h(x) = h(y)] = \frac{r(r-1)}{2}\cdot \frac{1}{m} < \frac{r^2}{m}
    \end{equation}
In particular, if $m = r^c$ for constant $c \geq 1$ then $h$ is \emph{injective} (i.e., has no collisions) on $R$ with probability at least $1 - 1/r^{c-2}$. 
Carter and Wegman gave several classes of universal hash functions from which we can sample a function uniformly at random in constant time. 

\paragraph{Fast Suffix Tree Construction}   
We now prove Lemma~\ref{lem:suffix_tree_construction}, showing how to construct our suffix trees in linear time with high probability.

\begin{lemma}\label{lem:suffix_tree_construction}
Given a string $\mathcal{K}$ of length $k \leq 2w$ there is an algorithm that uses $O(k + w/\log w)$ space and constructs the suffix tree over $\mathcal{K}$ in $O(k)$ time with probability $1 - 1/w^{1+\epsilon}$ for some $\epsilon > 0$. 
\end{lemma}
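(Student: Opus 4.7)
The plan is to reduce the (potentially very large) alphabet of $\mathcal{K}$ to rank space $\{1,\ldots,k'\}$, where $k' \leq k$ is the number of distinct characters in $\mathcal{K}$, and then feed the rewritten string to the Farach-Colton et al.~\cite{FFM2000} algorithm, which produces the suffix tree in worst-case $O(k)$ time since the resulting alphabet is polynomial in $k$. All of the work is therefore concentrated in building a correct rank assignment in $O(k)$ time with the stated probability, and I would split this into two regimes depending on the size of $k$, since for small $k$ we can only afford a hash function with a much narrower range (and correspondingly higher collision probability) than for large $k$.

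For large strings, $k > w^{1/5}$, I would sample a universal hash function $h : \Sigma \to [0, w^c]$ for a constant $c = 3 + \epsilon$. By~(\ref{eq:prob_not_injective}), $h$ is injective on the distinct characters of $\mathcal{K}$ with probability at least $1 - 4/w^{c-2} = 1 - 4/w^{1+\epsilon}$. I would compute all $k$ hashes and radix sort the pairs $(h(\mathcal{K}[i]), i)$ on the first coordinate; this runs in $O(k)$ time because $w^c < k^{5c}$ when $k > w^{1/5}$ and $c$ is constant, so the sort performs $O(1)$ radix passes each of $O(k)$ cost. A single left-to-right pass over the sorted array both assigns ranks (incrementing the current rank whenever the hash value changes) and verifies injectivity: whenever two consecutive entries share a hash value, I check in $O(1)$ word-RAM time that they share the underlying character, and on a mismatch I restart with a fresh $h$.

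For small strings, $k \leq w^{1/5}$, radix sort is no longer the bottleneck, but we cannot afford hash range $w^c$ because any auxiliary table would dominate the running time. Instead I would sample $h : \Sigma \to [0, w/\log w]$, which by~(\ref{eq:prob_not_injective}) is injective with probability at least $1 - \log w / w^{3/5}$. I would maintain a direct-addressing table of size $w/\log w$, using the standard in-place initialization trick so that each access, as well as a full reset on restart, takes worst-case $O(1)$ time without ever zeroing the table. A left-to-right scan of $\mathcal{K}$ then assigns each character a new rank when its slot is empty, reuses the stored rank when the slot already contains the same character, and flags a collision (triggering a restart) if the slot stores a different character. Since each attempt costs $O(k)$ and the per-try failure probability is at most $\log w / w^{3/5}$, repeating a constant number of times $t = \lceil 5(1+\epsilon)/3 \rceil$ drives the overall failure probability below $1/w^{1+\epsilon}$ for some $\epsilon > 0$ while keeping the running time at $O(k)$.

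Combining the two cases gives total time $O(k)$ with probability at least $1 - 1/w^{1+\epsilon}$ for some $\epsilon > 0$, and space $O(k + w/\log w)$ since only the small-string case requires the extra table. The hard part I expect is the small-string case: because $k$ may be far smaller than $w/\log w$, the in-place initialization trick is essential to avoid spending $\Theta(w/\log w)$ merely to initialize (or clean up) the direct-addressing table, and because a single hash into such a coarse range does not by itself give the required $1/w^{1+\epsilon}$ bound, constant-factor probability boosting via retries must be layered on top.
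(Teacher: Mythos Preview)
Your proposal is correct and follows essentially the same approach as the paper: reduce the alphabet to a polynomial-size range via universal hashing (with the same $w^{1/5}$ threshold separating the large and small regimes), verify injectivity, and then invoke the linear-time Farach-Colton et al.\ construction. The only differences are cosmetic---slightly different constants for the hash range and the number of retries, and an explicit rank assignment in the large case where the paper simply uses the hash value directly---but the ideas, the case split, the use of the constant-time initialization trick for the small regime, and the probability boosting by independent retries all match.
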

\begin{proof}
Let $\sigma = \{\mathcal{K}[i] \mid i \in [1,k]\} \subseteq \Sigma$ be the alphabet of $\mathcal{K}$. We show how to, in $O(k)$ time, find a function $h: \Sigma \rightarrow [1,k^{O(1)}]$ such that $h$ is injective on $\sigma$ with probability at least $1 - 1/w^{1+\epsilon}$. If $h$ is injective on $\sigma$, we can construct the suffix tree over $\mathcal{K}'$ where $\mathcal{K}'[i] = h(\mathcal{K}[i])$ in time $O(\text{sort}(k,k^{O(1)})) = O(k)$ using radix sort. After the tree is constructed we can substitute for the original alphabet in linear time. Therefore, the construction algorithm finishes in $O(k)$ time with probability at least $1 - 1/w^{1+\epsilon}$ (otherwise we make no guarantee on the construction time and we can build the suffix tree in any way).
     
 For some $m$ to be determined later, let $f: \Sigma \rightarrow [1,m]$ be chosen uniformly at random from a class of universal hash functions. By Equation~\ref{eq:prob_not_injective}, the probability that $f$ has a collision on $\sigma$ is 
     \[
        P[f \text{ has collisions on } \sigma] < \frac{|\sigma|^2}{m} \leq \frac{k^2}{m}
     \] 
We divide into the cases of large trees ($k \geq w^{1/5}$) and small trees ($k < w^{1/5}$). If $k$ is large then $w^{1/5} \leq k \leq 2w$, and we set $m = w^4$ so the probability that $f$ has a collision is at most
     \[
     \frac{k^2}{m} \leq \frac{(2w)^2}{w^4} = \frac{4}{w^2} \leq \frac{1}{w^{1+\epsilon}}
     \] for some $\epsilon > 0$. 
We check whether $f$ is injective by sorting the set $\{(x,f(x)) \mid x \in \sigma\}$ with respect to the $f(\cdot)$-values and checking if two consecutive elements $(x,f(x))$ and $(y,f(y))$ has $x\neq y$ and $f(x) = f(y)$. This takes time $O(\text{sort}(k,w^4)) = O(k)$ using radix sort since $k \geq w^{1/5}$. If $f$ is injective we set $h = f$, concluding the proof of the large case. 
     
 If $k$ is small then we allocate an array $A$ of length $w/\log w$ in constant time. For simplicity we assume that $A$ is initialized such that $A[i] = 0$ for all $i$. This can be avoided using standard constant-time initialization schemes; assume each entry in $A$ contains an arbitrary value initially. We maintain two other arrays $B$ and $C$ such that if we have written a value to $A[i]$ at least once then $A[i]$ is a pointer to some $B[j]$, $B[j]$ is a pointer to $A[i]$, and $C[j]$ stores the value most recently written to $A[i]$. From this we can determine if $A[i]$ has been initialized (check if the pointers match) and if it has not, we can initialize it in constant time.
     
 Then we set $m = w/\log w$ such that the probability that $f$ has a collision is no more than 
     \[
         \frac{k^2}{m} < \frac{w^{2/5}}{w/\log w} = \frac{\log w}{w^{3/5}} = \frac{\log w}{w^{1/2}}\cdot\frac{1}{w^{1/10}} \leq \frac{1}{w^{1/10}}
     \] 
     for $w \geq 16$. We check if $f$ is injective on $\sigma$ by for each character $x$ in $\mathcal{K}$ setting $A[f(x)] = x$ and seeing if two distinct characters hash to the same index. If $f$ is injective we then arbitrarily assign the values $1,\ldots,|\sigma|$ to the now non-zero indices of $A$ and let $h(x) = A[f(x)]$ (at this point we know $\sigma$ since it is equal to the number of entries in $A$ that we modified). To boost the probability of success we run this algorithm up to eleven times with independent choices for $f$. The probability that all of them fail is at most $1/w^{11/10} \leq 1/w^{1+\epsilon}$ concluding the proof for the small case. 
\end{proof}

In conjunction with Theorems~\ref{thm:timely_result_expectation} and~\ref{thm:delayed_result_expectation}, this proves Theorem~\ref{thm:main_whp}.

\section{Conclusion and Future Work}
We have studied two variants of the streaming sliding window string indexing problem; the timely variant, where queries must be answered immediately, and the delayed variant where a query may be answered at any point within the next $\delta$ characters received, for a specified parameter $\delta$. For a sliding window of size $w$ we have given an $O(w)$ space data structure that supports updates in $O(\log w)$ time with high probability and queries in $O(\log w)$ time with high probability per character in the pattern; each occurrence is reported in additional constant time. For the delayed variant we improved these bounds to $O(\log (w/\delta))$, where each occurrence is still reported in constant time.

One open problem is whether these bounds can be improved. Another is to find efficient solutions when queries may be interleaved with new updates to the stream. That is, while you are streaming a pattern new characters of $S$ might arrive that move the current window.


\end{document}